\theoremstyle{remark}
\newtheorem{rem}{\protect\remarkname}
\theoremstyle{plain}
\newtheorem{lem}{\protect\lemmaname}
\theoremstyle{plain}
\newtheorem{thm}{\protect\theoremname}
\theoremstyle{remark}
\newtheorem{assumption}{Assumption}
\providecommand{\remarkname}{Remark}
\providecommand{\theoremname}{Theorem}
\providecommand{\lemmaname}{Lemma}
\providecommand{\remarkname}{Remark}
\providecommand{\theoremname}{Theorem}
\begin{document}

\title{Bounds on Deep Neural Network Partial Derivatives with Respect to Parameters}

\author[1]{Omkar Sudhir Patil}
\author[1]{Brandon C. Fallin}
\author[1]{Cristian F. Nino}
\author[1]{Rebecca G. Hart}
\author[1]{Warren E. Dixon}
\affil[1]{Department of Mechanical and Aerospace Engineering, University of Florida, Gainesville FL 32611-6250 USA\\
Email: \texttt{\{patilomkarsudhir, brandonfallin, cristian1928, rebecca.hart, wdixon\}@ufl.edu}}

\maketitle

\begin{abstract}
Deep neural networks (DNNs) have emerged as a powerful tool with a
growing body of literature exploring Lyapunov-based approaches for
real-time system identification and control. These methods depend
on establishing bounds for the second partial derivatives of DNNs
with respect to their parameters, a requirement often assumed but
rarely addressed explicitly. This paper provides rigorous mathematical
formulations of polynomial bounds on both the first and second partial
derivatives of DNNs with respect to their parameters. We present lemmas
that characterize these bounds for fully-connected DNNs, while accommodating
various classes of activation function including sigmoidal and ReLU-like
functions. Our analysis yields closed-form expressions that enable
precise stability guarantees for Lyapunov-based deep neural networks
(Lb-DNNs). Furthermore, we extend our results to bound the higher-order
terms in first-order Taylor approximations of DNNs, providing important
tools for convergence analysis in gradient-based learning algorithms.
The resulting framework replaces previously assumed bounds with
explicit, computable expressions, thereby strengthening the mathematical
foundation of neural network applications in safety-critical
control systems.
\end{abstract}

\section{Introduction}

The integration of deep learning techniques into control systems has
emerged as a promising research direction, with numerous recent studies
exploring Lyapunov-based deep learning approaches for real-time system
identification and control \cite{Patil.Le.ea2022,Le.Patil.ea25,Patil.Le.ea.2022,Hart.patil.ea2023,Nino.Patil.ea2023,Vacchini.Sacchi.ea2023,Mei.Jian.2024,Lu.Wu.2024,Chen.Mei.ea2024,Nino.Patil.ea2025a}.
An important component of these approaches involves Lyapunov-based
stability analysis, which is based on the existence of known bounds
for the second partial derivatives of deep neural networks (DNNs)
with respect to their parameters. Despite the theoretical significance
of these bounds, their explicit mathematical formulation has remained
largely unaddressed in the literature.

This paper addresses this gap by deriving rigorous mathematical lemmas
that establish explicit polynomial bounds on both first and second
partial derivatives of fully-connected DNNs with respect to their
parameters. These results provide the theoretical foundation necessary
for guaranteeing stability in Lyapunov-based deep learning control
systems, transforming previously assumed bounds into computable
expressions.

For completeness, we present a detailed description
of the deep neural network architecture commonly employed in Lyapunov-based
deep learning results. While our analysis focuses on this standard
fully-connected feedforward DNN architecture, we note that the methodologies
and results developed herein can be readily extended to accommodate
various neural network architectures.

\section{Preliminaries on Deep Neural Networks}

Given some matrix $A\triangleq\left[a_{i,j}\right]\in\mathbb{R}^{n\times m}$,
where $a_{i,j}$ denotes the element in the $i^{th}$ row and $j^{th}$
column of $A$, the vectorization operator is defined as $\mathrm{vec}(A)\triangleq[a_{1,1},\ldots,a_{n,1},\ldots,a_{1,m},\ldots,a_{n,m}]^{\top}\in\mathbb{R}^{nm}$.
In the following development, we consider a fully-connected deep neural
network (DNN) $\Phi:\mathbb{R}^{L_{\textrm{in}}}\times\mathbb{R}^{p}\to\mathbb{R}^{L_{\textrm{out}}}$
with $k\in\mathbb{Z}_{>0}$ hidden layers, input size $L_{\textrm{in}}\in\mathbb{Z}_{>0}$,
output size $L_{\textrm{out}}\in\mathbb{Z}_{>0}$, and total number
of parameters $p\in\mathbb{Z}_{>0}$, where the parameters
include weights and bias terms. Let $\sigma\in\mathbb{R}^{L_{\textrm{in}}}$
denote the DNN input and $\theta\in\mathbb{R}^{p}$ denote the concatenated
vector of DNN parameters. Then, a fully-connected feedforward DNN
$\Phi(\sigma,\theta)$ is defined using a recursive relation $\Phi_{j}\in\mathbb{R}^{L_{j+1}}$
given by
\begin{eqnarray}
\Phi_{j} & \triangleq & \begin{cases}
V_{j}^{\top}\phi_{j}\left(\Phi_{j-1}\right), & j\in\left\{ 1,\ldots,k\right\} ,\\
V_{j}^{\top}\sigma_{a}, & j=0,
\end{cases}\label{eq:phij_dnn}
\end{eqnarray}
where $\Phi(\sigma,\theta)=\Phi_{k}$, and $\sigma_{a}\triangleq\left[\begin{array}{cc}
\sigma^{\top} & 1\end{array}\right]^{\top}$ denotes the augmented input that accounts for the bias terms, $V_{j}\in\mathbb{R}^{L_{j}\times L_{j+1}}$
denotes the matrix of weights and biases, and $L_{j}\in\mathbb{Z}_{>0}$
denotes the number of neurons in the $j^{\textrm{th}}$ layer for
all $j\in\left\{ 0,\ldots,k\right\} $ with $L_{0}\triangleq L_{\textrm{in}}+1$
and $L_{k+1}\triangleq L_{\textrm{out}}$. For notational convenience,
we write $n\triangleq L_{\textrm{out}}$ throughout. The vector of smooth activation
functions is denoted by $\phi_{j}:\mathbb{R}^{L_{j}}\to\mathbb{R}^{L_{j}}$
for all $j\in\left\{ 1,\ldots,k\right\} $. The activation functions
at each layer are represented as $\phi_{j}\triangleq\left[\begin{array}{cccc}
\varsigma_{j,1} & \ldots & \varsigma_{j,L_{j}-1} & 1\end{array}\right]^{\top}$, where $\varsigma_{j,i}:\mathbb{R}\to\mathbb{R}$ denotes the activation
function at the $i^{\mathrm{th}}$ node of the $j^{\mathrm{th}}$
layer. The trailing constant entry of $1$ in $\phi_{j}$ is a bias-augmentation
convention that propagates the bias term through subsequent layers. For the DNN architecture in (\ref{eq:phij_dnn}), the vector
of DNN weights is defined as $\theta\triangleq\left[\begin{array}{ccc}
\mathrm{vec}(V_{0})^{\top} & \ldots & \mathrm{vec}(V_{k})^{\top}\end{array}\right]^{\top}$ with size $p=\Sigma_{j=0}^{k}L_{j}L_{j+1}$. The Jacobian of the
activation function vector at the $j^{\mathrm{th}}$ layer is denoted
by $\phi_{j}^{\prime}:\mathbb{R}^{L_{j}}\to\mathbb{R}^{L_{j}\times L_{j}}$
and is defined as $\phi_{j}^{\prime}(y)\triangleq\frac{\partial}{\partial y}\phi_{j}(y)$.
Specifically, $\phi_{j}^{\prime}$ evaluates as $\phi_{j}^{\prime}=\mathrm{diag}\left(\left[\begin{array}{cccc}
\varsigma_{j,1}^{\prime} & \ldots & \varsigma_{j,L_{j}-1}^{\prime} & 0\end{array}\right]^{\top}\right)$, where $\varsigma_{j,i}^{\prime}(\zeta)\triangleq\frac{\partial}{\partial\zeta}\varsigma_{j,i}(\zeta)$,
and $\mathrm{diag}(\cdot)$ represents the diagonalization operation
which returns a diagonal matrix with the elements of its input vector
arranged along the diagonal. The Jacobian of the DNN with respect
to the weights is denoted by $\frac{\partial}{\partial\theta}\Phi(\sigma,\theta)$
is represented as
\[
\frac{\partial}{\partial\theta}\Phi(\sigma,\theta)=\left[\begin{array}{cccc}
\frac{\partial\Phi\left(\sigma,\theta\right)}{\partial\mathrm{vec}(V_{0})}, & \frac{\partial\Phi\left(\sigma,\theta\right)}{\partial\mathrm{vec}(V_{1})}, & \ldots, & \frac{\partial\Phi\left(\sigma,\theta\right)}{\partial\mathrm{vec}(V_{k})}\end{array}\right]\in\mathbb{R}^{n\times p},
\]
where $\frac{\partial\Phi\left(\sigma,\theta\right)}{\partial\mathrm{vec}(V_{j})}\in\mathbb{R}^{n\times L_{j}L_{j+1}}$.
Then, applying the property
$\frac{\partial}{\partial\mathrm{vec}(B)}\mathrm{vec}(ABC)=C^{\top}\otimes A$ to the DNN architecture in (\ref{eq:phij_dnn})
yields
\begin{equation}
\frac{\partial\Phi\left(\sigma,\theta\right)}{\partial\mathrm{vec}(V_{j})}=\left(\overset{\curvearrowleft}{\prod_{l=j+1}^{k}}V_{l}^{\top}\phi_{l}^{\prime}\left(\Phi_{l-1}\right)\right)\left(I_{L_{j+1}}\otimes\varphi_{j}^{\top}\right),\label{eq:Phiprime_j}
\end{equation}
where $\varphi_{j}$ is a shorthand notation defined as $\varphi_{0}\triangleq\sigma_{a}$
and $\varphi_{j}\triangleq\phi_{j}\left(\Phi_{j-1}\right)$ for all
$j\in\{1,\ldots,k\}$. In (\ref{eq:Phiprime_j}), the notation $\stackrel{\curvearrowleft}{\prod}$
denotes the right-to-left matrix product operation, i.e., ${\displaystyle \overset{\curvearrowleft}{\prod}_{p=1}^{m}A_{p}=A_{m}\ldots A_{2}A_{1}}$
and ${\displaystyle \overset{\curvearrowleft}{\prod}_{p=a}^{m}A_{p}=I}$
if $a>m$, and $\otimes$ denotes the Kronecker product. To facilitate
the subsequent development and analysis, the following assumption
is made regarding the activation functions of the DNN.

\begin{assumption}
\label{assm:activation bounds} For each $j\in\left\{ 1,\ldots,k\right\} $,
the activation function $\phi_{j}$, its Jacobian $\phi_{j}^{\prime}$,
and Hessian $\phi_{j}^{\prime\prime}\left(y\right)\triangleq\frac{\partial^{2}}{\partial y^{2}}\phi_{j}\left(y\right)$
are bounded as
\begin{eqnarray}
\left\Vert \phi_{j}\left(y\right)\right\Vert  & \leq & \mathfrak{a}_{1}\left\Vert y\right\Vert +\mathfrak{a}_{0},\nonumber \\
\left\Vert \phi_{j}^{\prime}\left(y\right)\right\Vert  & \leq & \mathfrak{b}_{0},\nonumber \\
\left\Vert \phi_{j}^{\prime\prime}\left(y\right)\right\Vert  & \leq & \mathfrak{c}_{0},\label{eq:Activation Bounds}
\end{eqnarray}
respectively, where $\mathfrak{a}_{0},\mathfrak{a}_{1},\mathfrak{b}_{0},\mathfrak{c}_{0}\in\mathbb{R}_{>0}$
are known constants.
\end{assumption}

\begin{rem}
\label{rem:activation bounds} Most activation functions used in practice
satisfy Assumption \ref{assm:activation bounds}. Specifically, sigmoidal
activation functions (e.g., logistic function, hyperbolic tangent,
etc.) have $\left\Vert \phi_{j}\left(y\right)\right\Vert $, $\left\Vert \phi_{j}^{\prime}\left(y\right)\right\Vert $,
and $\left\Vert \phi_{j}^{\prime\prime}\left(y\right)\right\Vert $
bounded uniformly by constants. Smooth approximations of rectified
linear units (ReLUs) such as Swish grow linearly, and hence satisfy
the bound $\left\Vert \phi_{j}\left(y\right)\right\Vert \leq\mathfrak{a}_{1}\left\Vert y\right\Vert +\mathfrak{a}_{0}$
of Assumption \ref{assm:activation bounds}. Exact ReLU activations
are excluded because they are not twice differentiable, but the results
apply to any smooth approximation satisfying the bounds in (\ref{eq:Activation Bounds}).
\end{rem}

\section{Bounding Analysis for DNN Layers and Partial Derivatives}

The objective is to obtain polynomial bounds on the DNN layers and
their first and second partial derivatives with respect to $\theta$.
In this section, Lemma \ref{lem:DNN bound} provides a bound on any
arbitrary DNN layer, and Lemmas \ref{lem:Jacobian bound} and \ref{lem:Hessian bounds}
provide bounds on the first and second partial derivatives of the DNN layers with respect
to $\theta$, respectively.
\begin{lem}
\label{lem:DNN bound} For the DNN architecture described in (\ref{eq:phij_dnn}),
the output of the $j^{th}$ layer of the DNN, $\Phi_{j}$, is bounded
as
\begin{equation}
\left\Vert \Phi_{j}\right\Vert \leq\mathfrak{a}_{1}^{j}\left\Vert \sigma_{a}\right\Vert \prod_{i=0}^{j}\left\Vert V_{i}\right\Vert +\mathfrak{a}_{0}\sum_{i=1}^{j}\left(\mathfrak{a}_{1}^{j-i}\prod_{l=i}^{j}\left\Vert V_{l}\right\Vert \right),\label{eq:Phi_j bound}
\end{equation}
for all $j\in\left\{ 0,\ldots,k\right\} $, and the corresponding
activation $\phi_{j}\left(\Phi_{j-1}\right)$ is bounded as
\begin{equation}
\left\Vert \phi_{j}\left(\Phi_{j-1}\right)\right\Vert \leq\mathfrak{a}_{1}^{j}\left\Vert \sigma_{a}\right\Vert \prod_{i=0}^{j-1}\left\Vert V_{i}\right\Vert +\mathfrak{a}_{0}\sum_{i=1}^{j-1}\left(\mathfrak{a}_{1}^{j-i}\prod_{l=i}^{j-1}\left\Vert V_{l}\right\Vert \right)+\mathfrak{a}_{0},\label{eq:phi_j bound}
\end{equation}
for all $j\in\left\{ 1,\ldots,k\right\} $. Furthermore, if the bound
$\left\Vert V_{j}\right\Vert \leq\bar{\theta}$ is applied for all
$j\in\left\{ 0,\ldots,k\right\} $, it follows that $\left\Vert \Phi_{j}\right\Vert \leq\mathfrak{a}_{1}^{j}\bar{\theta}^{j+1}\left\Vert \sigma_{a}\right\Vert +\mathfrak{a}_{0}\sum_{i=1}^{j}\mathfrak{a}_{1}^{j-i}\bar{\theta}^{j-i+1}$.
In addition, if the activation functions are uniformly bounded by
constants, i.e., $\mathfrak{a}_{1}=0$ in Assumption \ref{assm:activation bounds},
then the bound is further simplified to $\left\Vert \Phi_{0}\right\Vert \leq\bar{\theta}\left\Vert \sigma_{a}\right\Vert $
and $\left\Vert \Phi_{j}\right\Vert \leq\mathfrak{a}_{0}\bar{\theta}$
for all $j\in\left\{ 1,\ldots,k\right\} $.
\end{lem}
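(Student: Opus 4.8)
The plan is to prove the recursive bound (\ref{eq:Phi_j bound}) by induction on $j\in\{0,\ldots,k\}$, then obtain (\ref{eq:phi_j bound}) as an immediate corollary, and finally read off the two specializations. For the base case $j=0$, the definition $\Phi_0=V_0^\top\sigma_a$ together with submultiplicativity of the induced norm gives $\|\Phi_0\|\le\|V_0\|\,\|\sigma_a\|$, which is exactly the right-hand side of (\ref{eq:Phi_j bound}) at $j=0$ since the empty sum $\sum_{i=1}^{0}$ vanishes and $\mathfrak{a}_1^{0}\prod_{i=0}^{0}\|V_i\|=\|V_0\|$.

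For the inductive step I would assume (\ref{eq:Phi_j bound}) at index $j-1$. From (\ref{eq:phij_dnn}) we have $\Phi_j=V_j^\top\phi_j(\Phi_{j-1})$, so submultiplicativity and the first bound of Assumption \ref{assm:activation bounds} give $\|\Phi_j\|\le\|V_j\|\bigl(\mathfrak{a}_1\|\Phi_{j-1}\|+\mathfrak{a}_0\bigr)$. Substituting the inductive hypothesis for $\|\Phi_{j-1}\|$ and distributing the factor $\mathfrak{a}_1\|V_j\|$ into the product $\prod_{i=0}^{j-1}\|V_i\|$ and into each product $\prod_{l=i}^{j-1}\|V_l\|$ promotes their upper index from $j-1$ to $j$; the residual term $\mathfrak{a}_0\|V_j\|$ is precisely the $i=j$ summand of $\mathfrak{a}_0\sum_{i=1}^{j}\mathfrak{a}_1^{j-i}\prod_{l=i}^{j}\|V_l\|$, since $\mathfrak{a}_1^{0}\prod_{l=j}^{j}\|V_l\|=\|V_j\|$, so it merges with the re-indexed sum to reproduce exactly (\ref{eq:Phi_j bound}) at index $j$. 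The inequality (\ref{eq:phi_j bound}) then follows by applying $\|\phi_j(\Phi_{j-1})\|\le\mathfrak{a}_1\|\Phi_{j-1}\|+\mathfrak{a}_0$ from Assumption \ref{assm:activation bounds} and inserting the already-established bound (\ref{eq:Phi_j bound}) at index $j-1$ for $\|\Phi_{j-1}\|$; the index limits drop by one accordingly and an extra additive $\mathfrak{a}_0$ remains, matching the stated form.

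For the specializations, imposing $\|V_i\|\le\bar\theta$ for all $i$ turns $\prod_{i=0}^{j}\|V_i\|$ into $\bar\theta^{j+1}$ and each $\prod_{l=i}^{j}\|V_l\|$ into $\bar\theta^{j-i+1}$, which yields $\|\Phi_j\|\le\mathfrak{a}_1^{j}\bar\theta^{j+1}\|\sigma_a\|+\mathfrak{a}_0\sum_{i=1}^{j}\mathfrak{a}_1^{j-i}\bar\theta^{j-i+1}$ directly. If in addition $\mathfrak{a}_1=0$, then for $j\ge1$ the coefficient $\mathfrak{a}_1^{j}=0$ annihilates the first term, while in the sum $\mathfrak{a}_1^{j-i}=0$ unless $j-i=0$, leaving only the $i=j$ contribution $\mathfrak{a}_0\bar\theta$; the case $j=0$ is treated on its own because $\mathfrak{a}_1^{0}=1$ and the sum is empty, giving $\|\Phi_0\|\le\bar\theta\|\sigma_a\|$.

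The only real obstacle I anticipate is the index bookkeeping in the inductive step — carefully verifying that multiplying the hypothesis by $\mathfrak{a}_1\|V_j\|$ and then absorbing the residual $\mathfrak{a}_0\|V_j\|$ as the $i=j$ term reproduces the closed form with the correct product and summation limits, rather than an off-by-one variant. Everything else is a routine application of norm submultiplicativity and Assumption \ref{assm:activation bounds}; no structural difficulty arises here, and in particular the Kronecker-product Jacobian formula (\ref{eq:Phiprime_j}) is not needed for this lemma.
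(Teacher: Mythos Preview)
Your proposal is correct and follows essentially the same inductive argument as the paper: a base case at $j=0$ via submultiplicativity, an inductive step that combines $\|\Phi_j\|\le\|V_j\|\bigl(\mathfrak{a}_1\|\Phi_{j-1}\|+\mathfrak{a}_0\bigr)$ with the hypothesis at $j-1$, and then the two specializations read off from the resulting closed form. Your index bookkeeping is spelled out more carefully than in the paper, and you are right that the Jacobian formula (\ref{eq:Phiprime_j}) plays no role here.
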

\begin{proof}
Consider the base case of (\ref{eq:Phi_j bound}) when $j=0$ for
mathematical induction. Substituting $j=0$ into (\ref{eq:phij_dnn})
yields $\Phi_{0}=V_{0}^{\top}\sigma_{a}$, which is bounded as
\[
\left\Vert \Phi_{0}\right\Vert \leq\left\Vert V_{0}\right\Vert \left\Vert \sigma_{a}\right\Vert .
\]
Hence, (\ref{eq:Phi_j bound}) holds for the base case.

Assume for induction that the bound in (\ref{eq:Phi_j bound}) holds
for $\left\Vert \Phi_{j-1}\right\Vert $ for all $j\in\left\{ 1,\ldots,k\right\} $,
i.e.,
\begin{equation}
\left\Vert \Phi_{j-1}\right\Vert \leq\mathfrak{a}_{1}^{j-1}\left\Vert \sigma_{a}\right\Vert \prod_{i=0}^{j-1}\left\Vert V_{i}\right\Vert +\mathfrak{a}_{0}\sum_{i=1}^{j-1}\left(\mathfrak{a}_{1}^{j-i-1}\prod_{l=i}^{j-1}\left\Vert V_{l}\right\Vert \right).\label{eq:Phi_j-1 bound}
\end{equation}
Recall from (\ref{eq:phij_dnn}) that $\Phi_{j}=V_{j}^{\top}\phi_{j}\left(\Phi_{j-1}\right)$
for all $j\in\left\{ 1,\ldots,k\right\} $. Therefore, it follows
that
\begin{equation}
\left\Vert \Phi_{j}\right\Vert \leq\left\Vert V_{j}\right\Vert \left\Vert \phi_{j}\left(\Phi_{j-1}\right)\right\Vert .\label{eq:Phi j induction bound}
\end{equation}
Applying (\ref{eq:Activation Bounds}) gives $\left\Vert \phi_{j}\left(\Phi_{j-1}\right)\right\Vert \leq\mathfrak{a}_{1}\left\Vert \Phi_{j-1}\right\Vert +\mathfrak{a}_{0}$.
Substituting the inductive assumption (\ref{eq:Phi_j-1 bound}) into this expression yields
\begin{eqnarray}
\left\Vert \phi_{j}\left(\Phi_{j-1}\right)\right\Vert  & \leq & \mathfrak{a}_{1}\left[\mathfrak{a}_{1}^{j-1}\left\Vert \sigma_{a}\right\Vert \prod_{i=0}^{j-1}\left\Vert V_{i}\right\Vert +\mathfrak{a}_{0}\sum_{i=1}^{j-1}\left(\mathfrak{a}_{1}^{j-i-1}\prod_{l=i}^{j-1}\left\Vert V_{l}\right\Vert \right)\right]+\mathfrak{a}_{0}\nonumber \\
 & = & \mathfrak{a}_{1}^{j}\left\Vert \sigma_{a}\right\Vert \prod_{i=0}^{j-1}\left\Vert V_{i}\right\Vert +\mathfrak{a}_{0}\sum_{i=1}^{j-1}\left(\mathfrak{a}_{1}^{j-i}\prod_{l=i}^{j-1}\left\Vert V_{l}\right\Vert \right)+\mathfrak{a}_{0},\label{eq:phi_j bound expansion}
\end{eqnarray}
which establishes (\ref{eq:phi_j bound}). Substituting (\ref{eq:phi_j bound expansion})
into (\ref{eq:Phi j induction bound}) and distributing $\left\Vert V_{j}\right\Vert $
gives
\begin{equation}
\left\Vert \Phi_{j}\right\Vert \leq\mathfrak{a}_{1}^{j}\left\Vert \sigma_{a}\right\Vert \prod_{i=0}^{j}\left\Vert V_{i}\right\Vert +\mathfrak{a}_{0}\sum_{i=1}^{j-1}\left(\mathfrak{a}_{1}^{j-i}\prod_{l=i}^{j}\left\Vert V_{l}\right\Vert \right)+\mathfrak{a}_{0}\left\Vert V_{j}\right\Vert .\label{eq:Phi_j expansion}
\end{equation}
The trailing term $\mathfrak{a}_{0}\left\Vert V_{j}\right\Vert $ corresponds to
the $i=j$ summand of the sum in (\ref{eq:Phi_j bound}), since at $i=j$
the summand evaluates to $\mathfrak{a}_{1}^{j-j}\prod_{l=j}^{j}\left\Vert V_{l}\right\Vert =\left\Vert V_{j}\right\Vert $.
Absorbing this term into the sum extends the upper limit from $j-1$ to $j$,
yielding (\ref{eq:Phi_j bound}). Hence, by mathematical induction,
(\ref{eq:Phi_j bound}) and (\ref{eq:phi_j bound}) hold for all $j\in\left\{ 1,\ldots,k\right\} $.

Furthermore, if the bound $\left\Vert V_{j}\right\Vert \leq\bar{\theta}$
is applied for all $j\in\left\{ 0,\ldots,k\right\} $, the products
in (\ref{eq:Phi_j bound}) are bounded by counting factors:
$\prod_{i=0}^{j}\left\Vert V_{i}\right\Vert $ contains $j+1$ factors and is therefore
bounded by $\bar{\theta}^{j+1}$, while $\prod_{l=i}^{j}\left\Vert V_{l}\right\Vert $
contains $j-i+1$ factors and is bounded by $\bar{\theta}^{j-i+1}$.
Substituting these bounds into (\ref{eq:Phi_j bound}) yields
$\left\Vert \Phi_{j}\right\Vert \leq\mathfrak{a}_{1}^{j}\bar{\theta}^{j+1}\left\Vert \sigma_{a}\right\Vert +\mathfrak{a}_{0}\sum_{i=1}^{j}\mathfrak{a}_{1}^{j-i}\bar{\theta}^{j-i+1}$.
In addition, if the activation functions are uniformly bounded by
constants, i.e., $\mathfrak{a}_{1}=0$ in Assumption \ref{assm:activation bounds},
then it follows that $\left\Vert \Phi_{j}\right\Vert \leq\left\Vert V_{j}\right\Vert \left\Vert \phi_{j}\left(\Phi_{j-1}\right)\right\Vert \leq\left\Vert V_{j}\right\Vert \mathfrak{a}_{0}\leq\bar{\theta}\mathfrak{a}_{0}$
for all $j\in\{1,\ldots,k\}$.
\end{proof}

\begin{lem}
\label{lem:Jacobian bound} For the DNN architecture described in
(\ref{eq:phij_dnn}), the Jacobian of the $w^{th}$ layer with respect
to the $j^{th}$ layer weights is bounded as
\begin{equation}
\left\Vert \frac{\partial\Phi_{w}}{\partial\mathrm{vec}\left(V_{j}\right)}\right\Vert \leq\begin{cases}
\begin{aligned}
 & \mathfrak{b}_{0}^{w-j}\left(\prod_{l=j+1}^{w}\left\Vert V_{l}\right\Vert \right)\\
 & \quad\bigg(\mathfrak{a}_{1}^{j}\left\Vert \sigma_{a}\right\Vert \prod_{i=0}^{j-1}\left\Vert V_{i}\right\Vert \\
 & \quad\quad+\mathfrak{a}_{0}\sum_{i=1}^{j-1}\Big(\mathfrak{a}_{1}^{j-i}\prod_{l=i}^{j-1}\left\Vert V_{l}\right\Vert \Big)+\mathfrak{a}_{0}\bigg),
\end{aligned} & w\geq j,\\
0, & j>w,
\end{cases}\label{eq:Jacobian w bound}
\end{equation}
for all $w,j\in\left\{ 0,\ldots,k\right\} $. Furthermore, if the
bound $\left\Vert V_{j}\right\Vert \leq\bar{\theta}$ holds for all
$j\in\left\{ 0,\ldots,k\right\} $, then the Jacobian $\frac{\partial\Phi\left(\sigma,\theta\right)}{\partial\theta}$
is bounded as
\begin{equation}
\begin{aligned}
\left\Vert \frac{\partial\Phi\left(\sigma,\theta\right)}{\partial\theta}\right\Vert  & \leq\mathfrak{b}_{0}^{k}\bar{\theta}^{k}\left\Vert \sigma_{a}\right\Vert +\sum_{j=1}^{k}\mathfrak{b}_{0}^{k-j}\bar{\theta}^{k-j}\bigg(\mathfrak{a}_{1}^{j}\left\Vert \sigma_{a}\right\Vert \bar{\theta}^{j}\\
 & \qquad+\mathfrak{a}_{0}\sum_{i=1}^{j-1}\Big(\mathfrak{a}_{1}^{j-i}\bar{\theta}^{j-i}\Big)+\mathfrak{a}_{0}\bigg).
\end{aligned}
\label{eq:Jacobian bound 1}
\end{equation}
In addition, if the activations are uniformly bounded by constants,
i.e., $\mathfrak{a}_{1}=0$ in Assumption \ref{assm:activation bounds},
then the Jacobian bound reduces to
\begin{equation}
\left\Vert \frac{\partial\Phi\left(\sigma,\theta\right)}{\partial\theta}\right\Vert \leq\mathfrak{b}_{0}^{k}\bar{\theta}^{k}\left\Vert \sigma_{a}\right\Vert +\mathfrak{a}_{0}\sum_{j=1}^{k}\mathfrak{b}_{0}^{k-j}\bar{\theta}^{k-j}.\label{eq:Jacobian bound 2}
\end{equation}
\end{lem}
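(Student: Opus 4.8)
The plan is to first obtain a closed-form expression for the partial Jacobian $\frac{\partial\Phi_{w}}{\partial\mathrm{vec}(V_{j})}$ and then bound it factor by factor. By the recursion (\ref{eq:phij_dnn}), the map $\sigma\mapsto\Phi_{w}$ is itself a fully-connected feedforward DNN with $w$ hidden layers, so it depends only on $V_{0},\ldots,V_{w}$; consequently $\frac{\partial\Phi_{w}}{\partial\mathrm{vec}(V_{j})}=0$ whenever $j>w$, which settles the second case. For $j\leq w$, applying the identity $\frac{\partial}{\partial\mathrm{vec}(B)}\mathrm{vec}(ABC)=C^{\top}\otimes A$ to (\ref{eq:phij_dnn}) exactly as in the derivation of (\ref{eq:Phiprime_j}), but truncating the chain at layer $w$ instead of $k$, gives
\[
\frac{\partial\Phi_{w}}{\partial\mathrm{vec}(V_{j})}=\left(\overset{\curvearrowleft}{\prod_{l=j+1}^{w}}V_{l}^{\top}\phi_{l}^{\prime}\left(\Phi_{l-1}\right)\right)\left(I_{L_{j+1}}\otimes\varphi_{j}^{\top}\right),
\]
with the right-to-left product equal to the identity when $j=w$.

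Next I would pass to norms using submultiplicativity. Each factor of the chain satisfies $\left\Vert V_{l}^{\top}\phi_{l}^{\prime}(\Phi_{l-1})\right\Vert\leq\left\Vert V_{l}\right\Vert\left\Vert\phi_{l}^{\prime}(\Phi_{l-1})\right\Vert\leq\mathfrak{b}_{0}\left\Vert V_{l}\right\Vert$ by Assumption \ref{assm:activation bounds}, and there are $w-j$ of them, which produces the factor $\mathfrak{b}_{0}^{w-j}\prod_{l=j+1}^{w}\left\Vert V_{l}\right\Vert$. For the Kronecker block one uses $\left\Vert I_{L_{j+1}}\otimes\varphi_{j}^{\top}\right\Vert=\left\Vert\varphi_{j}\right\Vert$. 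It then remains to bound $\left\Vert\varphi_{j}\right\Vert$: for $j\geq1$ this equals $\left\Vert\phi_{j}(\Phi_{j-1})\right\Vert$, which is exactly the right-hand side of (\ref{eq:phi_j bound}) from Lemma \ref{lem:DNN bound} (the bracketed factor in (\ref{eq:Jacobian w bound})), while for $j=0$ we have $\left\Vert\varphi_{0}\right\Vert=\left\Vert\sigma_{a}\right\Vert$, which is still dominated by that bracketed factor (whose value at $j=0$ is $\left\Vert\sigma_{a}\right\Vert+\mathfrak{a}_{0}$ under the empty-product and empty-sum conventions). Multiplying these estimates yields (\ref{eq:Jacobian w bound}).

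For the bound on the full Jacobian, I would set $w=k$, since $\Phi(\sigma,\theta)=\Phi_{k}$, and exploit the block-row structure $\frac{\partial\Phi(\sigma,\theta)}{\partial\theta}=\big[\frac{\partial\Phi_{k}}{\partial\mathrm{vec}(V_{0})},\ldots,\frac{\partial\Phi_{k}}{\partial\mathrm{vec}(V_{k})}\big]$, so that $\left\Vert\frac{\partial\Phi(\sigma,\theta)}{\partial\theta}\right\Vert\leq\sum_{j=0}^{k}\left\Vert\frac{\partial\Phi_{k}}{\partial\mathrm{vec}(V_{j})}\right\Vert$. Substituting $\left\Vert V_{j}\right\Vert\leq\bar{\theta}$ into (\ref{eq:Jacobian w bound}) collapses each nested product of operator norms into the corresponding power of $\bar{\theta}$, and collecting the resulting terms gives (\ref{eq:Jacobian bound 1}). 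Imposing $\mathfrak{a}_{1}=0$ then annihilates every term of (\ref{eq:phi_j bound}) except the trailing $\mathfrak{a}_{0}$, so $\left\Vert\varphi_{j}\right\Vert\leq\mathfrak{a}_{0}$ for $j\geq1$ and $\left\Vert\varphi_{0}\right\Vert=\left\Vert\sigma_{a}\right\Vert$; the $j=0$ summand contributes $\mathfrak{b}_{0}^{k}\bar{\theta}^{k}\left\Vert\sigma_{a}\right\Vert$ and each $j\geq1$ summand contributes $\mathfrak{a}_{0}\mathfrak{b}_{0}^{k-j}\bar{\theta}^{k-j}$, which is (\ref{eq:Jacobian bound 2}).

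The step I expect to require the most care is getting the closed-form partial Jacobian right: correctly truncating the right-to-left chain at layer $w$, tracking which weights and activation Jacobians enter the product, and handling the degenerate case $j=w$ (empty chain, so $\frac{\partial\Phi_{j}}{\partial\mathrm{vec}(V_{j})}=I_{L_{j+1}}\otimes\varphi_{j}^{\top}$, whose norm $\left\Vert\varphi_{j}\right\Vert$ matches the stated bound with $\mathfrak{b}_{0}^{0}=1$ and the empty product). Everything after that is routine: submultiplicativity, Assumption \ref{assm:activation bounds}, the Kronecker norm identity, Lemma \ref{lem:DNN bound}, and the block-row norm inequality; the only remaining subtlety is the $j=0$ boundary, where no activation has been applied to $\sigma_{a}$ and one must check that the bracketed bound in (\ref{eq:Jacobian w bound}) still dominates $\left\Vert\varphi_{0}\right\Vert=\left\Vert\sigma_{a}\right\Vert$.
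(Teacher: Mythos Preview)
Your proposal is correct and follows essentially the same route as the paper: derive the truncated closed-form for $\frac{\partial\Phi_{w}}{\partial\mathrm{vec}(V_{j})}$ by replacing $k$ with $w$ in (\ref{eq:Phiprime_j}), bound the chain via submultiplicativity together with Assumption~\ref{assm:activation bounds} and Lemma~\ref{lem:DNN bound}, and then sum the block norms to bound the full Jacobian before specializing to $\left\Vert V_{j}\right\Vert\leq\bar{\theta}$ and $\mathfrak{a}_{1}=0$. Your treatment is in fact a bit more careful than the paper's about the boundary cases $j=w$ and $j=0$ and about the Kronecker norm identity $\left\Vert I_{L_{j+1}}\otimes\varphi_{j}^{\top}\right\Vert=\left\Vert\varphi_{j}\right\Vert$, but there is no substantive difference in the argument.
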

\begin{proof}
Replacing $k$ with $w$ in (\ref{eq:Phiprime_j}) yields
\begin{equation}
\frac{\partial\Phi_{w}}{\partial\mathrm{vec}\left(V_{j}\right)}=\begin{cases}
\left(\overset{\curvearrowleft}{\prod_{l=j+1}^{w}}V_{l}^{\top}\phi_{l}^{\prime}\left(\Phi_{l-1}\right)\right)\left(I_{L_{j+1}}\otimes\varphi_{j}^{\top}\right), & w\geq j,\\
0, & j>w,
\end{cases}\label{eq:Jacobian bound j}
\end{equation}
where $\frac{\partial\Phi_{w}}{\partial\mathrm{vec}\left(V_{j}\right)}=0$
if $j>w$ because the outputs of the inner layers do not depend on
the outer layer weights. For $w\geq j$, taking norms of (\ref{eq:Jacobian bound j})
and applying submultiplicativity of the spectral norm yields
\begin{equation}
\left\Vert \frac{\partial\Phi_{w}}{\partial\mathrm{vec}\left(V_{j}\right)}\right\Vert \leq\left\Vert \overset{\curvearrowleft}{\prod_{l=j+1}^{w}}V_{l}^{\top}\phi_{l}^{\prime}\left(\Phi_{l-1}\right)\right\Vert \left\Vert I_{L_{j+1}}\otimes\varphi_{j}^{\top}\right\Vert .\label{eq:Jacobian split}
\end{equation}
The Kronecker product satisfies $\left\Vert I_{n}\otimes A\right\Vert =\left\Vert A\right\Vert $
in the spectral norm, since $I_{n}\otimes A$ is block-diagonal with $n$ copies of $A$.
Therefore $\left\Vert I_{L_{j+1}}\otimes\varphi_{j}^{\top}\right\Vert =\left\Vert \varphi_{j}\right\Vert $.
Applying submultiplicativity to the matrix product, together with the bound
$\left\Vert \phi_{l}^{\prime}\left(\Phi_{l-1}\right)\right\Vert \leq\mathfrak{b}_{0}$
from Assumption \ref{assm:activation bounds} and the identity $\left\Vert V_{l}^{\top}\right\Vert =\left\Vert V_{l}\right\Vert $, gives
\begin{equation}
\left\Vert \overset{\curvearrowleft}{\prod_{l=j+1}^{w}}V_{l}^{\top}\phi_{l}^{\prime}\left(\Phi_{l-1}\right)\right\Vert \leq\prod_{l=j+1}^{w}\left\Vert V_{l}\right\Vert \left\Vert \phi_{l}^{\prime}\left(\Phi_{l-1}\right)\right\Vert \leq\mathfrak{b}_{0}^{w-j}\prod_{l=j+1}^{w}\left\Vert V_{l}\right\Vert .\label{eq:product chain bound}
\end{equation}
The factor $\left\Vert \varphi_{j}\right\Vert $ is bounded by Lemma \ref{lem:DNN bound}: for $j\geq1$,
$\varphi_{j}=\phi_{j}\left(\Phi_{j-1}\right)$, so (\ref{eq:phi_j bound}) applies directly,
while for $j=0$ the identity $\varphi_{0}=\sigma_{a}$ gives $\left\Vert \varphi_{0}\right\Vert =\left\Vert \sigma_{a}\right\Vert $.
Combining (\ref{eq:Jacobian split}), (\ref{eq:product chain bound}),
and the bound on $\left\Vert \varphi_{j}\right\Vert $ yields
\[
\left\Vert \frac{\partial\Phi_{w}}{\partial\mathrm{vec}\left(V_{j}\right)}\right\Vert \leq\begin{cases}
\begin{array}{c}
\mathfrak{b}_{0}^{w-j}\left(\prod_{l=j+1}^{w}\left\Vert V_{l}\right\Vert \right)\left(\mathfrak{a}_{1}^{j}\left\Vert \sigma_{a}\right\Vert \prod_{i=0}^{j-1}\left\Vert V_{i}\right\Vert \right.\\
\left.+\mathfrak{a}_{0}\sum_{i=1}^{j-1}\left(\mathfrak{a}_{1}^{j-i}\prod_{l=i}^{j-1}\left\Vert V_{l}\right\Vert \right)+\mathfrak{a}_{0}\right),
\end{array} & w\geq j,\\
0, & j>w.
\end{cases}
\]
Furthermore, using the triangle inequality, the Jacobian is bounded as
\[
\left\Vert \frac{\partial\Phi\left(\sigma,\theta\right)}{\partial\theta}\right\Vert \leq\left\Vert \frac{\partial\Phi_{k}}{\partial\mathrm{vec}\left(V_{0}\right)}\right\Vert +\left\Vert \frac{\partial\Phi_{k}}{\partial\mathrm{vec}\left(V_{1}\right)}\right\Vert +\ldots+\left\Vert \frac{\partial\Phi_{k}}{\partial\mathrm{vec}\left(V_{k}\right)}\right\Vert .
\]
Therefore,
\begin{eqnarray*}
\left\Vert \frac{\partial\Phi\left(\sigma,\theta\right)}{\partial\theta}\right\Vert  & \leq & \mathfrak{b}_{0}^{k}\left(\prod_{l=1}^{k}\left\Vert V_{l}\right\Vert \right)\left\Vert \sigma_{a}\right\Vert \\
 &  & +\sum_{j=1}^{k}\mathfrak{b}_{0}^{k-j}\left(\prod_{l=j+1}^{k}\left\Vert V_{l}\right\Vert \right)\bigg(\mathfrak{a}_{1}^{j}\left\Vert \sigma_{a}\right\Vert \prod_{i=0}^{j-1}\left\Vert V_{i}\right\Vert \\
 &  & \qquad+\mathfrak{a}_{0}\sum_{i=1}^{j-1}\Big(\mathfrak{a}_{1}^{j-i}\prod_{l=i}^{j-1}\left\Vert V_{l}\right\Vert \Big)+\mathfrak{a}_{0}\bigg).
\end{eqnarray*}
As a result, applying the bound $\left\Vert V_{j}\right\Vert \leq\bar{\theta}$
for all $j\in\left\{ 0,\ldots,k\right\} $, the products are bounded by counting factors:
$\prod_{l=1}^{k}\left\Vert V_{l}\right\Vert \leq\bar{\theta}^{k}$
($k$ factors), $\prod_{l=j+1}^{k}\left\Vert V_{l}\right\Vert \leq\bar{\theta}^{k-j}$
($k-j$ factors), $\prod_{i=0}^{j-1}\left\Vert V_{i}\right\Vert \leq\bar{\theta}^{j}$
($j$ factors), and $\prod_{l=i}^{j-1}\left\Vert V_{l}\right\Vert \leq\bar{\theta}^{j-i}$
($j-i$ factors). Substituting these bounds yields
\begin{eqnarray*}
\left\Vert \frac{\partial\Phi\left(\sigma,\theta\right)}{\partial\theta}\right\Vert  & \leq & \mathfrak{b}_{0}^{k}\bar{\theta}^{k}\left\Vert \sigma_{a}\right\Vert +\sum_{j=1}^{k}\mathfrak{b}_{0}^{k-j}\bar{\theta}^{k-j}\bigg(\mathfrak{a}_{1}^{j}\left\Vert \sigma_{a}\right\Vert \bar{\theta}^{j}\\
 &  & \qquad+\mathfrak{a}_{0}\sum_{i=1}^{j-1}\Big(\mathfrak{a}_{1}^{j-i}\bar{\theta}^{j-i}\Big)+\mathfrak{a}_{0}\bigg).
\end{eqnarray*}
Furthermore, if the activation functions are uniformly bounded by
constants, i.e., $\mathfrak{a}_{1}=0$ in Assumption \ref{assm:activation bounds},
then the bound is further simplified as
\[
\left\Vert \frac{\partial\Phi\left(\sigma,\theta\right)}{\partial\theta}\right\Vert \leq\mathfrak{b}_{0}^{k}\bar{\theta}^{k}\left\Vert \sigma_{a}\right\Vert +\mathfrak{a}_{0}\sum_{j=1}^{k}\mathfrak{b}_{0}^{k-j}\bar{\theta}^{k-j}.
\]
\end{proof}

\begin{lem}
\label{lem:Hessian bounds}For the DNN architecture described in (\ref{eq:phij_dnn}),
the second partial derivative term $\frac{\partial^{2}\Phi_{w}}{\partial\mathrm{vec}\left(V_{q}\right)\partial\mathrm{vec}\left(V_{j}\right)}$
is bounded according to
\begin{eqnarray*}
\text{\ensuremath{\left\Vert \frac{\partial^{2}\Phi_{w}}{\partial\mathrm{vec}\left(V_{q}\right)\partial\mathrm{vec}\left(V_{j}\right)}\right\Vert }} & \leq & \mathcal{R}_{w,q,j}\mathcal{Q}_{j}\mathcal{Q}_{q}+\mathcal{T}_{w,j}\mathcal{Q}_{j},
\end{eqnarray*}
where $\mathcal{Q}_{j}$ is defined as
\begin{equation*}
\begin{aligned}
\mathcal{Q}_{j} & \triangleq\mathfrak{a}_{1}^{j}\left\Vert \sigma_{a}\right\Vert \prod_{i=0}^{j-1}\left\Vert V_{i}\right\Vert \\
 & \quad+\mathfrak{a}_{0}\sum_{i=1}^{j-1}\left(\mathfrak{a}_{1}^{j-i}\prod_{l=i}^{j-1}\left\Vert V_{l}\right\Vert \right)+\mathfrak{a}_{0},
\end{aligned}
\end{equation*}
$\mathcal{Q}_{q}$ is obtained by replacing $j$ with $q$ in the
expression for $\mathcal{Q}_{j}$,
\[
\mathcal{T}_{w,j}\triangleq\mathfrak{b}_{0}^{w-j+1}\left(\prod_{l=j+1}^{w}\left\Vert V_{l}\right\Vert \right),
\]
and
\[
\mathcal{R}_{w,q,j}\triangleq\begin{cases}
\begin{aligned}
 & \bigg(\sum_{h=1}^{w-q}\mathfrak{c}_{0}\mathfrak{b}_{0}^{2w-q-j-h-1}\bigg(\prod_{l=q+1}^{w-h}\left\Vert V_{l}\right\Vert \bigg)\bigg)\\
 & \quad\bigg(\prod_{l=j+1}^{w}\left\Vert V_{l}\right\Vert \bigg),
\end{aligned} & j\leq q\leq w,\\
\mathcal{R}_{w,j,q}, & q\leq j\leq w,
\end{cases}
\]
 for all $j,q\leq w\leq k$, implying that $\frac{\partial^{2}\Phi_{w}}{\partial\mathrm{vec}\left(V_{q}\right)\partial\mathrm{vec}\left(V_{j}\right)}$
is bounded by a quadratic polynomial in terms of $\left\Vert \sigma_{a}\right\Vert $.
Furthermore, consider the bound $\left\Vert V_{j}\right\Vert \leq\bar{\theta}$
for all $j\in\left\{ 0,\ldots,k\right\} $. Then, the terms $\mathcal{Q}_{j}$,
$\mathcal{T}_{w,j}$, and $\mathcal{R}_{w,j,q}$ reduce to
\begin{align*}
\mathcal{Q}_{j} & =\mathfrak{a}_{1}^{j}\bar{\theta}^{j}\left\Vert \sigma_{a}\right\Vert +\mathfrak{a}_{0}\sum_{i=1}^{j-1}\left(\mathfrak{a}_{1}^{j-i}\bar{\theta}^{j-i}\right)+\mathfrak{a}_{0},\\
\mathcal{T}_{w,j} & =\mathfrak{b}_{0}^{w-j+1}\bar{\theta}^{w-j},\\
\mathcal{R}_{w,q,j} & =\sum_{h=1}^{w-q}\mathfrak{c}_{0}\mathfrak{b}_{0}^{2w-q-j-h-1}\bar{\theta}^{2w-h-q-j}.
\end{align*}
Furthermore, if the activation functions are uniformly bounded by
constants, i.e., $\mathfrak{a}_{1}=0$ in Assumption \ref{assm:activation bounds},
then $\mathcal{Q}_{j}$ reduces to $\mathcal{Q}_{0}=\left\Vert \sigma_{a}\right\Vert $
and $\mathcal{Q}_{j}=\mathfrak{a}_{0}$ for all $j\geq1$.
\end{lem}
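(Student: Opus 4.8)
The plan is to differentiate the closed-form Jacobian once more with respect to $\mathrm{vec}(V_{q})$ and then bound every resulting term using Assumption~\ref{assm:activation bounds} together with Lemmas~\ref{lem:DNN bound} and \ref{lem:Jacobian bound}. Since the mixed partial derivative is symmetric, it suffices to prove the bound for $j\leq q\leq w$; the regime $q\leq j\leq w$ then follows by interchanging the roles of $j$ and $q$, which is exactly why the statement sets $\mathcal{R}_{w,q,j}=\mathcal{R}_{w,j,q}$ in that case and why the factor $\mathcal{Q}_{j}\mathcal{Q}_{q}$ is symmetric. Assume $j\leq q\leq w$ and recall from \eqref{eq:Phiprime_j}, with $k$ replaced by $w$, that
\[
\frac{\partial\Phi_{w}}{\partial\mathrm{vec}(V_{j})}=\left(\overset{\curvearrowleft}{\prod_{l=j+1}^{w}}V_{l}^{\top}\phi_{l}^{\prime}(\Phi_{l-1})\right)\left(I_{L_{j+1}}\otimes\varphi_{j}^{\top}\right).
\]
Because $j\leq q$, the trailing factor $\varphi_{j}=\phi_{j}(\Phi_{j-1})$ depends only on $V_{0},\ldots,V_{j-1}$ and is therefore independent of $V_{q}$, so the differentiation with respect to $\mathrm{vec}(V_{q})$ acts only on the ordered product $\overset{\curvearrowleft}{\prod_{l=j+1}^{w}}V_{l}^{\top}\phi_{l}^{\prime}(\Phi_{l-1})$.

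Applying the product rule to that ordered product yields two families of terms. The first family is the single contribution from differentiating the explicit factor $V_{q}^{\top}$ (present when $j<q\leq w$); using the identity $\frac{\partial}{\partial\mathrm{vec}(B)}\mathrm{vec}(ABC)=C^{\top}\otimes A$ and the commutation matrix relating $\mathrm{vec}(V_{q}^{\top})$ to $\mathrm{vec}(V_{q})$, this contribution is the product of the ``upstream'' block $V_{w}^{\top}\phi_{w}^{\prime}(\Phi_{w-1})\cdots V_{q+1}^{\top}\phi_{q+1}^{\prime}(\Phi_{q})$, an orthogonal commutation factor, and the ``downstream'' block $\phi_{q}^{\prime}(\Phi_{q-1})V_{q-1}^{\top}\cdots V_{j+1}^{\top}\phi_{j+1}^{\prime}(\Phi_{j})(I_{L_{j+1}}\otimes\varphi_{j}^{\top})$. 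The second family consists, for each $l$ with $q+1\leq l\leq w$, of the contribution from differentiating $\phi_{l}^{\prime}(\Phi_{l-1})$ through its argument (which depends on $V_{q}$ precisely because $l-1\geq q$); by the chain rule this replaces the factor $\phi_{l}^{\prime}(\Phi_{l-1})$ by $\phi_{l}^{\prime\prime}(\Phi_{l-1})$ contracted with $\frac{\partial\Phi_{l-1}}{\partial\mathrm{vec}(V_{q})}$, leaving all other factors of the ordered product unchanged.

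Next I would take the norm of the full second derivative, using the triangle inequality over the summands and submultiplicativity of the norm over matrix and Kronecker products ($\left\Vert A\otimes B\right\Vert=\left\Vert A\right\Vert\left\Vert B\right\Vert$, and an orthogonal commutation matrix has unit norm). For the $l$-th term of the second family, I would bound $\left\Vert\phi_{l}^{\prime}\right\Vert\leq\mathfrak{b}_{0}$ and $\left\Vert\phi_{l}^{\prime\prime}\right\Vert\leq\mathfrak{c}_{0}$ via \eqref{eq:Activation Bounds}, bound $\left\Vert\frac{\partial\Phi_{l-1}}{\partial\mathrm{vec}(V_{q})}\right\Vert$ via \eqref{eq:Jacobian w bound} of Lemma~\ref{lem:Jacobian bound} (which supplies the factor $\mathfrak{b}_{0}^{l-1-q}\prod_{i=q+1}^{l-1}\left\Vert V_{i}\right\Vert$ times $\mathcal{Q}_{q}$), bound $\left\Vert\varphi_{j}\right\Vert\leq\mathcal{Q}_{j}$ via \eqref{eq:phi_j bound} of Lemma~\ref{lem:DNN bound}, and collect the surviving weight matrices and activation Jacobians into a power of $\mathfrak{b}_{0}$ times a product of the $\left\Vert V_{i}\right\Vert$; after the substitution $h=w-l+1$ (so $h$ runs over $1,\ldots,w-q$ and $l-1=w-h$), summing these terms reproduces exactly $\mathcal{R}_{w,q,j}\mathcal{Q}_{j}\mathcal{Q}_{q}$. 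For the first family, the analogous bounds on the upstream and downstream blocks together with $\left\Vert\varphi_{j}\right\Vert\leq\mathcal{Q}_{j}$ give a contribution over-bounded by $\mathcal{T}_{w,j}\mathcal{Q}_{j}$. Adding the two families yields the claimed inequality, and since $\mathcal{Q}_{j}$ and $\mathcal{Q}_{q}$ are each affine in $\left\Vert\sigma_{a}\right\Vert$ while $\mathcal{R}_{w,q,j}$ and $\mathcal{T}_{w,j}$ do not involve $\sigma_{a}$, the bound is a quadratic polynomial in $\left\Vert\sigma_{a}\right\Vert$. The simplified forms then follow by substituting $\left\Vert V_{l}\right\Vert\leq\bar{\theta}$ into $\mathcal{Q}_{j}$, $\mathcal{T}_{w,j}$, and $\mathcal{R}_{w,q,j}$ and collapsing the resulting products and geometric-type sums, and the constant-activation simplifications follow by setting $\mathfrak{a}_{1}=0$ in Assumption~\ref{assm:activation bounds}, which leaves $\mathcal{Q}_{0}=\left\Vert\sigma_{a}\right\Vert$.

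I expect the main obstacle to be the matrix-calculus bookkeeping in the differentiation step: identifying every occurrence of $V_{q}$ in the ordered product (one explicit occurrence as the factor $V_{q}^{\top}$, plus one implicit occurrence inside $\phi_{l}^{\prime}(\Phi_{l-1})$ for each $l>q$), handling correctly the $\mathrm{vec}$/Kronecker/commutation structure generated when $V_{q}^{\top}$ is differentiated, and arranging the re-indexed sum so that the powers of $\mathfrak{b}_{0}$ and the weight-norm products line up precisely with the definition of $\mathcal{R}_{w,q,j}$. A secondary, routine point is verifying that the explicit-weight contribution is dominated by $\mathcal{T}_{w,j}\mathcal{Q}_{j}$ uniformly across the boundary cases $q=w$, $q=j$, and $j<q<w$.
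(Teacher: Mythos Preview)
Your proposal is correct and follows essentially the same route as the paper: differentiate the closed-form Jacobian \eqref{eq:Phiprime_j} once more using the product and chain rules, separate the result into the contribution from the explicit factor $V_{q}^{\top}$ and the contributions from each $\phi_{l}^{\prime}(\Phi_{l-1})$ with $l>q$, bound every factor via Assumption~\ref{assm:activation bounds} and Lemmas~\ref{lem:DNN bound}--\ref{lem:Jacobian bound}, and reindex with $h=w-l+1$ to recover $\mathcal{R}_{w,q,j}$. The only cosmetic difference is that the paper simplifies by taking $\Phi_{w}$ scalar (arguing the vector case follows elementwise), whereas you keep the full matrix structure and invoke the commutation matrix explicitly; your flagged ``secondary, routine point'' about the explicit-weight contribution being dominated by $\mathcal{T}_{w,j}\mathcal{Q}_{j}$ is precisely the step the paper treats informally in passing.
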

\begin{proof}
For the ease of subsequent exposition, we consider scalar outputs
for $\Phi_{w}$. This reduction does not compromise generality because
the bounds apply element-wise for the vector case. The second partial
derivative term $\frac{\partial^{2}\Phi_{w}}{\partial\mathrm{vec}\left(V_{q}\right)\partial\mathrm{vec}\left(V_{j}\right)}$
is expressed as
\begin{eqnarray}
\frac{\partial^{2}\Phi_{w}}{\partial\mathrm{vec}\left(V_{q}\right)\partial\mathrm{vec}\left(V_{j}\right)} & = & \frac{\partial}{\partial\mathrm{vec}\left(V_{q}\right)}\left(\frac{\partial}{\partial\mathrm{vec}\left(V_{j}\right)}V_{w}^{\top}\varphi_{w}\right)\nonumber \\
 & = & \begin{cases}
\begin{aligned}
 & \frac{\partial}{\partial\mathrm{vec}\left(V_{q}\right)}\left(\overset{\curvearrowleft}{\prod_{l=j+1}^{w}}V_{l}^{\top}\phi_{l}^{\prime}\left(\Phi_{l-1}\right)\right)\\
 & \quad\left(I_{L_{j+1}}\otimes\varphi_{j}^{\top}\right),
\end{aligned} & w\geq j,\\
0, & j>w.
\end{cases}\label{eq:partial Hessian w}
\end{eqnarray}
Due to the symmetry of Hessian matrices, it follows that $\frac{\partial^{2}\Phi_{w}}{\partial\mathrm{vec}\left(V_{q}\right)\partial\mathrm{vec}\left(V_{j}\right)}=\frac{\partial^{2}\Phi_{w}}{\partial\mathrm{vec}\left(V_{j}\right)\partial\mathrm{vec}\left(V_{q}\right)}$.
Without loss of generality, consider the case $w\geq q\geq j$. Such
consideration does not affect generality because, if $q<j$, then
the analysis can be performed by exchanging $q$ and $j$. Differentiating
the product chain $\overset{\curvearrowleft}{\prod_{l=j+1}^{w}}V_{l}^{\top}\phi_{l}^{\prime}\left(\Phi_{l-1}\right)$
with respect to $\mathrm{vec}\left(V_{q}\right)$ produces $w-q+1$
nonzero terms by the product rule. For each $l\in\left\{ q+1,\ldots,w\right\} $,
differentiating $\phi_{l}^{\prime}\left(\Phi_{l-1}\right)$ via the
chain rule contributes one term containing $\phi_{l}^{\prime\prime}\left(\Phi_{l-1}\right)\frac{\partial\Phi_{l-1}}{\partial\mathrm{vec}\left(V_{q}\right)}$,
and one additional term arises from differentiating the factor $V_{q}^{\top}$
directly. The factors $\phi_{l}^{\prime}\left(\Phi_{l-1}\right)$
for $l\in\left\{ j+1,\ldots,q\right\} $ contribute zero because $\Phi_{l-1}$
does not depend on $V_{q}$ when $l-1<q$. Applying
the product rule and chain rule to (\ref{eq:partial Hessian w}) yields
\begin{eqnarray*}
\frac{\partial^{2}\Phi_{w}}{\partial\mathrm{vec}\left(V_{q}\right)\partial\mathrm{vec}\left(V_{j}\right)} & = & V_{w}^{\top}\left(\phi_{w}^{\prime\prime}\left(\Phi_{w-1}\right)\frac{\partial\Phi_{w-1}}{\partial\mathrm{vec}\left(V_{q}\right)}\right)\\
 &  & V_{w-1}^{\top}\phi_{w-1}^{\prime}\left(\Phi_{w-2}\right)\ldots\\
 &  & V_{j+1}^{\top}\phi_{j+1}^{\prime}\left(\Phi_{j}\right)\left(I_{L_{j+1}}\otimes\varphi_{j}^{\top}\right)\\
 &  & +V_{w}^{\top}\phi_{w}^{\prime}\left(\Phi_{w-1}\right)V_{w-1}^{\top}\left(\phi_{w-1}^{\prime\prime}\left(\Phi_{w-2}\right)\frac{\partial\Phi_{w-2}}{\partial\mathrm{vec}\left(V_{q}\right)}\right)\\
 &  & V_{w-2}^{\top}\phi_{w-2}^{\prime}\left(\Phi_{w-3}\right)\ldots\\
 &  & V_{j+1}^{\top}\phi_{j+1}^{\prime}\left(\Phi_{j}\right)\left(I_{L_{j+1}}\otimes\varphi_{j}^{\top}\right)\\
 &  & \vdots\\
 &  & +V_{w}^{\top}\phi_{w}^{\prime}\left(\Phi_{w-1}\right)\ldots V_{q+1}^{\top}\left(\phi_{q+1}^{\prime\prime}\left(\Phi_{q}\right)\frac{\partial\Phi_{q}}{\partial\mathrm{vec}\left(V_{q}\right)}\right)\\
 &  & V_{q}^{\top}\phi_{q}^{\prime}\left(\Phi_{q-1}\right)\ldots\\
 &  & V_{j+1}^{\top}\phi_{j+1}^{\prime}\left(\Phi_{j}\right)\left(I_{L_{j+1}}\otimes\varphi_{j}^{\top}\right)\\
 &  & +V_{w}^{\top}\phi_{w}^{\prime}\left(\Phi_{w-1}\right)\ldots\left(I_{L_{q+1}}\otimes\phi_{q}^{\prime\top}\left(\Phi_{q-1}\right)\right)\\
 &  & \ldots V_{j+1}^{\top}\phi_{j+1}^{\prime}\left(\Phi_{j}\right)\left(I_{L_{j+1}}\otimes\varphi_{j}^{\top}\right).
\end{eqnarray*}
Taking norms and applying the triangle inequality yields
\begin{eqnarray}
\left\Vert \frac{\partial^{2}\Phi_{w}}{\partial\mathrm{vec}\left(V_{q}\right)\partial\mathrm{vec}\left(V_{j}\right)}\right\Vert  & \leq & \left\Vert V_{w}\right\Vert \left\Vert \phi_{w}^{\prime\prime}\left(\Phi_{w-1}\right)\right\Vert \left\Vert \frac{\partial\Phi_{w-1}}{\partial\mathrm{vec}\left(V_{q}\right)}\right\Vert \nonumber \\
 &  & \left\Vert V_{w-1}\right\Vert \left\Vert \phi_{w-1}^{\prime}\left(\Phi_{w-2}\right)\right\Vert \ldots\nonumber \\
 &  & \left\Vert V_{j+1}\right\Vert \left\Vert \phi_{j+1}^{\prime}\left(\Phi_{j}\right)\right\Vert \left\Vert \varphi_{j}\right\Vert \nonumber \\
 &  & +\left\Vert V_{w}\right\Vert \left\Vert \phi_{w}^{\prime}\left(\Phi_{w-1}\right)\right\Vert \left\Vert V_{w-1}\right\Vert \nonumber \\
 &  & \left\Vert \phi_{w-1}^{\prime\prime}\left(\Phi_{w-2}\right)\right\Vert \left\Vert \frac{\partial\Phi_{w-2}}{\partial\mathrm{vec}\left(V_{q}\right)}\right\Vert \nonumber \\
 &  & \left\Vert V_{w-2}\right\Vert \left\Vert \phi_{w-2}^{\prime}\left(\Phi_{w-3}\right)\right\Vert \ldots\nonumber \\
 &  & \left\Vert V_{j+1}\right\Vert \left\Vert \phi_{j+1}^{\prime}\left(\Phi_{j}\right)\right\Vert \left\Vert \varphi_{j}\right\Vert \nonumber \\
 &  & \vdots\nonumber \\
 &  & +\left\Vert V_{w}\right\Vert \left\Vert \phi_{w}^{\prime}\left(\Phi_{w-1}\right)\right\Vert \ldots\left\Vert V_{q+1}\right\Vert \nonumber \\
 &  & \left\Vert \phi_{q+1}^{\prime\prime}\left(\Phi_{q}\right)\right\Vert \left\Vert \frac{\partial\Phi_{q}}{\partial\mathrm{vec}\left(V_{q}\right)}\right\Vert \nonumber \\
 &  & \left\Vert V_{q}\right\Vert \left\Vert \phi_{q}^{\prime}\left(\Phi_{q-1}\right)\right\Vert \ldots\nonumber \\
 &  & \left\Vert V_{j+1}\right\Vert \left\Vert \phi_{j+1}^{\prime}\left(\Phi_{j}\right)\right\Vert \left\Vert \varphi_{j}\right\Vert \nonumber \\
 &  & +\left\Vert V_{w}\right\Vert \left\Vert \phi_{w}^{\prime}\left(\Phi_{w-1}\right)\right\Vert \ldots\left\Vert \phi_{q}^{\prime}\left(\Phi_{q-1}\right)\right\Vert \nonumber \\
 &  & \ldots\left\Vert V_{j+1}\right\Vert \left\Vert \phi_{j+1}^{\prime}\left(\Phi_{j}\right)\right\Vert \left\Vert \varphi_{j}\right\Vert .\label{eq:partial Hessian bounds 2}
\end{eqnarray}
We express the bounds from Lemma \ref{lem:Jacobian bound} using
the definition of $\mathcal{Q}_{j}$ as
\begin{align*}
\left\Vert \frac{\partial\Phi_{w}}{\partial\mathrm{vec}\left(V_{j}\right)}\right\Vert  & \leq\mathfrak{b}_{0}^{w-j}\left(\prod_{l=j+1}^{w}\left\Vert V_{l}\right\Vert \right)\mathcal{Q}_{j},\\
\left\Vert \frac{\partial\Phi_{w}}{\partial\mathrm{vec}\left(V_{q}\right)}\right\Vert  & \leq\mathfrak{b}_{0}^{w-q}\left(\prod_{l=q+1}^{w}\left\Vert V_{l}\right\Vert \right)\mathcal{Q}_{q}.
\end{align*}
Applying these bounds to (\ref{eq:partial Hessian bounds 2}) yields
\begin{eqnarray*}
\left\Vert \frac{\partial^{2}\Phi_{w}}{\partial\mathrm{vec}\left(V_{q}\right)\partial\mathrm{vec}\left(V_{j}\right)}\right\Vert  & \leq & \mathfrak{c}_{0}\mathfrak{b}_{0}^{w-j-1}\left(\prod_{l=j+1}^{w}\left\Vert V_{l}\right\Vert \right)\mathcal{Q}_{j}\\
 &  & \qquad\left(\mathfrak{b}_{0}^{w-q-1}\left(\prod_{l=q+1}^{w-1}\left\Vert V_{l}\right\Vert \right)\mathcal{Q}_{q}\right)\\
 &  & +\mathfrak{c}_{0}\mathfrak{b}_{0}^{w-j-1}\left(\prod_{l=j+1}^{w}\left\Vert V_{l}\right\Vert \right)\mathcal{Q}_{j}\\
 &  & \qquad\left(\mathfrak{b}_{0}^{w-q-2}\left(\prod_{l=q+1}^{w-2}\left\Vert V_{l}\right\Vert \right)\mathcal{Q}_{q}\right)\\
 &  & \vdots\\
 &  & +\mathfrak{c}_{0}\mathfrak{b}_{0}^{w-j-1}\left(\prod_{l=j+1}^{w}\left\Vert V_{l}\right\Vert \right)\mathcal{Q}_{j}\\
 &  & \qquad\left(\mathfrak{b}_{0}^{0}\left(\prod_{l=q+1}^{q}\left\Vert V_{l}\right\Vert \right)\mathcal{Q}_{q}\right)\\
 &  & +\mathfrak{b}_{0}^{w-j+1}\left(\prod_{l=j+1}^{w}\left\Vert V_{l}\right\Vert \right)\mathcal{Q}_{j}.
\end{eqnarray*}
Factoring the common term $\left(\prod_{l=j+1}^{w}\left\Vert V_{l}\right\Vert \right)\mathcal{Q}_{j}$
out of every line and combining the $\mathfrak{b}_{0}$ exponents
$\left(w-j-1\right)+\left(w-q-h\right)=2w-q-j-h-1$ in the $h$-th line of the sum yields
\begin{eqnarray*}
\left\Vert \frac{\partial^{2}\Phi_{w}}{\partial\mathrm{vec}\left(V_{q}\right)\partial\mathrm{vec}\left(V_{j}\right)}\right\Vert  & \leq & \left(\prod_{l=j+1}^{w}\left\Vert V_{l}\right\Vert \right)\mathcal{Q}_{j}\bigg[\sum_{h=1}^{w-q}\mathfrak{c}_{0}\mathfrak{b}_{0}^{2w-q-j-h-1}\\
 &  & \quad\bigg(\prod_{l=q+1}^{w-h}\left\Vert V_{l}\right\Vert \bigg)\mathcal{Q}_{q}+\mathfrak{b}_{0}^{w-j+1}\bigg].
\end{eqnarray*}
Identifying the bracketed expression with the definitions of $\mathcal{R}_{w,q,j}$
and $\mathcal{T}_{w,j}$ gives
\[
\text{\ensuremath{\left\Vert \frac{\partial^{2}\Phi_{w}}{\partial\mathrm{vec}\left(V_{q}\right)\partial\mathrm{vec}\left(V_{j}\right)}\right\Vert }}\leq\mathcal{R}_{w,q,j}\mathcal{Q}_{j}\mathcal{Q}_{q}+\mathcal{T}_{w,j}\mathcal{Q}_{j}.
\]
Because the terms $\mathcal{Q}_{j}$ and $\mathcal{Q}_{q}$ are linear
in $\left\Vert \sigma_{a}\right\Vert $, it follows that the bound
on $\left\Vert \frac{\partial^{2}\Phi_{w}}{\partial\mathrm{vec}\left(V_{q}\right)\partial\mathrm{vec}\left(V_{j}\right)}\right\Vert $
is quadratic in $\left\Vert \sigma_{a}\right\Vert $.
\end{proof}

\section{Bounds on Higher-Order Terms in First-Order Taylor Series Approximation}

The bounds obtained in the previous section can be used to compute
analytical bounds on the first-order Taylor series approximation of
$\text{\ensuremath{\theta}}\mapsto\Phi\left(\sigma,\theta\right)$.
To this end, consider a compact convex set of admissible DNN parameters
given by $\Theta\subset\mathbb{R}^{p}$. Additionally, consider any
arbitrary elements $\theta^{*},\hat{\theta}\in\Theta$, and let $\tilde{\theta}\triangleq\theta^{*}-\hat{\theta}\in\mathbb{R}^{p}$
denote their difference. Furthermore, let $\Phi^{(i)}$ denote the
$i^{th}$ element of $\Phi$ for all $i\in\{1,\ldots,n\}$, where
$n=L_{\textrm{out}}$. Then performing
a first-order application of Taylor's theorem \cite[Theorem 4.7]{Lax2017}
element-wise to $\Phi$ yields
\begin{equation}
\Phi(\sigma,\theta^{*})-\Phi\left(\sigma,\hat{\theta}\right)=\frac{\partial\Phi\left(\sigma,\hat{\theta}\right)}{\partial\hat{\theta}}\tilde{\theta}+R\left(\sigma,\tilde{\theta}\right),\label{eq:Taylor_approx}
\end{equation}
where $R:\mathbb{R}^{L_{\textrm{in}}}\times\mathbb{R}^{p}\to\mathbb{R}^{n}$ denotes
the Lagrange remainder term given by
\[
\begin{aligned}
R\left(\sigma,\tilde{\theta}\right) & =\frac{1}{2}\bigg[\tilde{\theta}^{\top}\frac{\partial^{2}\Phi^{(1)}}{\partial\hat{\theta}^{2}}\left(\sigma,\theta^{*}+\varpi_{1}\left(\hat{\theta}\right)\tilde{\theta}\right)\tilde{\theta},\\
 & \qquad\ldots,\tilde{\theta}^{\top}\frac{\partial^{2}\Phi^{(n)}}{\partial\hat{\theta}^{2}}\left(\sigma,\theta^{*}+\varpi_{n}\left(\hat{\theta}\right)\tilde{\theta}\right)\tilde{\theta}\bigg]^{\top},
\end{aligned}
\]
where $\varpi_{i}:\mathbb{R}^{p}\to\left[0,1\right]$ denote unknown
functions parameterizing a convex combination of $\theta^{*}$ and
$\hat{\theta}$ for all $i\in\{1,\ldots,n\}$. The following theorem
provides a polynomial bound on the Lagrange remainder term.
\begin{thm}
\label{thm:Lagrange remainder bound} There exists a polynomial function
$\rho_{0}:\mathbb{R}_{\geq0}\to\mathbb{R}_{\geq0}$ of the form $\rho_{0}\left(\left\Vert \sigma\right\Vert \right)=a_{2}\left\Vert \sigma\right\Vert ^{2}+a_{1}\left\Vert \sigma\right\Vert ^{1}+a_{0}$
with constants $a_{2},a_{1},a_{0}\in\mathbb{R}_{>0}$ such that the
Lagrange remainder term is bounded as $\left\Vert R\left(\sigma,\tilde{\theta}\right)\right\Vert \leq\rho_{0}\left(\left\Vert \sigma\right\Vert \right)\left\Vert \tilde{\theta}\right\Vert ^{2}$.
\end{thm}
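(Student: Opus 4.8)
The plan is to reduce the vector remainder to the operator norms of the parameter Hessians of the scalar components $\Phi^{(i)}$, to invoke Lemma~\ref{lem:Hessian bounds} on each block of those Hessians, and then to collect constants. \emph{First}, from the Lagrange form of $R$, each component obeys $|R^{(i)}(\sigma,\tilde\theta)| = \frac{1}{2}|\tilde\theta^{\top}\frac{\partial^{2}\Phi^{(i)}}{\partial\theta^{2}}(\sigma,\theta_i)\tilde\theta| \le \frac{1}{2}\|\frac{\partial^{2}\Phi^{(i)}}{\partial\theta^{2}}(\sigma,\theta_i)\|\,\|\tilde\theta\|^{2}$, where $\theta_i \triangleq \theta^{*}+\varpi_i(\hat\theta)\tilde\theta$ is the convex combination of $\theta^{*}$ and $\hat\theta$ appearing in the remainder. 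Since $\|R\| \le \sqrt{n}\,\max_i |R^{(i)}|$, it suffices to bound $\max_i \|\frac{\partial^{2}\Phi^{(i)}}{\partial\theta^{2}}(\sigma,\theta_i)\|$ by a quadratic polynomial in $\|\sigma\|$ with positive coefficients. Because $\Theta$ is convex, every $\theta_i \in \Theta$, and because $\Theta$ is compact there exists $\bar\theta\in\mathbb{R}_{>0}$ with $\|V_j\|\le\|\theta\|\le\bar\theta$ for each weight block $V_j$, $j\in\{0,\ldots,k\}$, of every $\theta\in\Theta$; hence the $\bar\theta$-reduced conclusions of Lemmas~\ref{lem:Jacobian bound} and~\ref{lem:Hessian bounds} are available uniformly at each $\theta_i$.

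\emph{Second}, writing $\theta = [\mathrm{vec}(V_0)^{\top},\ldots,\mathrm{vec}(V_k)^{\top}]^{\top}$, the Hessian $\frac{\partial^{2}\Phi^{(i)}}{\partial\theta^{2}}$ is a $(k+1)\times(k+1)$ block matrix whose $(q,j)$ block is $\frac{\partial^{2}\Phi^{(i)}_{k}}{\partial\mathrm{vec}(V_q)\partial\mathrm{vec}(V_j)}$, and an elementary estimate gives $\|\frac{\partial^{2}\Phi^{(i)}}{\partial\theta^{2}}\| \le \sum_{q=0}^{k}\sum_{j=0}^{k}\|\frac{\partial^{2}\Phi^{(i)}_{k}}{\partial\mathrm{vec}(V_q)\partial\mathrm{vec}(V_j)}\|$. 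Applying Lemma~\ref{lem:Hessian bounds} with $w=k$ (using its stated element-wise validity for the scalar component $\Phi^{(i)}_{k}$) bounds each block by $\mathcal{R}_{k,q,j}\mathcal{Q}_j\mathcal{Q}_q + \mathcal{T}_{k,j}\mathcal{Q}_j$. With $\|V_j\|\le\bar\theta$, the quantity $\mathcal{Q}_j = \mathfrak{a}_1^{j}\bar\theta^{j}\|\sigma_a\| + (\sum_{i=1}^{j-1}\mathfrak{a}_1^{j-i}\bar\theta^{j-i}+\mathfrak{a}_0)$ is affine in $\|\sigma_a\|$ with nonnegative coefficients, while $\mathcal{T}_{k,j}$ and $\mathcal{R}_{k,q,j}$ are nonnegative constants; consequently each block bound, and hence --- after summing over the finitely many pairs $(q,j)$ and maximizing over $i$ --- also $\max_i \|\frac{\partial^{2}\Phi^{(i)}}{\partial\theta^{2}}(\sigma,\theta_i)\|$, is of the form $c_2\|\sigma_a\|^{2}+c_1\|\sigma_a\|+c_0$ with $c_0,c_1,c_2\ge 0$ and $c_0>0$ (the additive $\mathfrak{a}_0$ in every $\mathcal{Q}_j$ forces $c_0>0$), the constants $c_m$ depending only on $k$, $\bar\theta$, $\mathfrak{a}_0$, $\mathfrak{a}_1$, $\mathfrak{b}_0$, and $\mathfrak{c}_0$.

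\emph{Third}, since $\sigma_a = [\sigma^{\top},\,1]^{\top}$ one has $\|\sigma_a\|^{2} = \|\sigma\|^{2}+1$ and $\|\sigma_a\| = \sqrt{\|\sigma\|^{2}+1}\le\|\sigma\|+1$, so $c_2\|\sigma_a\|^{2}+c_1\|\sigma_a\|+c_0 \le c_2\|\sigma\|^{2}+c_1\|\sigma\|+(c_0+c_1+c_2)$. Combining this with the first step and setting $a_2 = \frac{\sqrt{n}}{2}c_2$, $a_1=\frac{\sqrt{n}}{2}c_1$, and $a_0=\frac{\sqrt{n}}{2}(c_0+c_1+c_2)$ yields $\|R(\sigma,\tilde\theta)\| \le (a_2\|\sigma\|^{2}+a_1\|\sigma\|+a_0)\|\tilde\theta\|^{2} = \rho_0(\|\sigma\|)\|\tilde\theta\|^{2}$; here $a_0>0$ automatically, and $a_1$, $a_2$ may be replaced by any larger positive constants should either vanish, giving the required form $\rho_0(\|\sigma\|)=a_2\|\sigma\|^{2}+a_1\|\sigma\|+a_0$ with $a_2,a_1,a_0\in\mathbb{R}_{>0}$.

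\emph{The main obstacle} is minor: the argument is largely bookkeeping, and the two points requiring care are (i) the claim that Lemma~\ref{lem:Hessian bounds}, proved ``without loss of generality'' for a scalar output, indeed controls each $(q,j)$ block of the \emph{vector}-output Hessian of $\Phi_k$ componentwise, and (ii) the aggregation of those block bounds into a bound on the operator norm of the full $(k+1)\times(k+1)$ block Hessian. Both are standard, and I would not carry out the remaining degree-counting in $\|\sigma\|$ in detail.
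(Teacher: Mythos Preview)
Your proposal is correct and follows essentially the same route as the paper: reduce the remainder to the parameter Hessians of the scalar components, note that the intermediate points lie in the convex compact $\Theta$ so that $\|V_j\|\le\bar\theta$ uniformly, decompose each Hessian into its $(q,j)$ blocks, invoke Lemma~\ref{lem:Hessian bounds} on every block, and sum to obtain a quadratic polynomial in $\|\sigma_a\|$ (hence in $\|\sigma\|$). The only cosmetic differences are that the paper bounds $\|R\|$ by $\tfrac{1}{2}\|\tilde\theta\|^2\sum_{i}\|\partial^2\Phi^{(i)}/\partial\theta^2\|$ rather than your $\tfrac{\sqrt{n}}{2}\|\tilde\theta\|^2\max_i\|\partial^2\Phi^{(i)}/\partial\theta^2\|$, and that you are more explicit about the passage from $\|\sigma_a\|$ to $\|\sigma\|$ and about enlarging $a_1,a_2$ to strictly positive constants if either happens to vanish.
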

\begin{proof}
Using the triangle inequality, it follows that
\begin{equation}
\begin{aligned}
\left\Vert R\left(\sigma,\tilde{\theta}\right)\right\Vert  & \leq\frac{\left\Vert \tilde{\theta}\right\Vert ^{2}}{2}\bigg(\left\Vert \frac{\partial^{2}\Phi^{(1)}}{\partial\hat{\theta}^{2}}\left(\sigma,\theta^{*}+\varpi_{1}\left(\hat{\theta}\right)\tilde{\theta}\right)\right\Vert \\
 & \qquad+\ldots+\left\Vert \frac{\partial^{2}\Phi^{(n)}}{\partial\hat{\theta}^{2}}\left(\sigma,\theta^{*}+\varpi_{n}\left(\hat{\theta}\right)\tilde{\theta}\right)\right\Vert \bigg).
\end{aligned}
\label{eq:Elementwise Hessian Bounds}
\end{equation}
Due to the facts that $\Theta$ is a convex set, $\varpi_{i}\left(\hat{\theta}\right)\in\left[0,1\right]$,
and $\theta^{*},\hat{\theta}\in\Theta$, it follows that $\theta^{*}+\varpi_{i}\left(\hat{\theta}\right)\tilde{\theta}\in\Theta$
for all $i\in\{1,\ldots,n\}$ and $\hat{\theta}\in\Theta$. Because
$\varpi_{i}$ is unknown, the term $\left\Vert \frac{\partial^{2}\Phi^{(i)}}{\partial\hat{\theta}^{2}}\left(\sigma,\theta^{*}+\varpi_{i}\left(\hat{\theta}\right)\tilde{\theta}\right)\right\Vert $
is bounded in the following analysis considering the worst case bound
in which $\left\Vert \theta^{*}+\varpi_{i}\left(\hat{\theta}\right)\tilde{\theta}\right\Vert \leq\bar{\theta}$
for all $i\in\{1,\ldots,n\}$. To this end, a bound is developed on
$\frac{\partial^{2}\Phi^{(i)}}{\partial\hat{\theta}^{2}}\left(\sigma,\theta\right)$
considering an arbitrary $\left\Vert \theta\right\Vert \leq\bar{\theta}$.
The Hessian matrix $\frac{\partial^{2}\Phi^{(i)}}{\partial\theta^{2}}\left(\sigma,\theta\right)$
is comprised of blocks of the form $\frac{\partial^{2}\Phi^{(i)}\left(\sigma,\theta\right)}{\partial\mathrm{vec}(V_{q})\partial\mathrm{vec}(V_{j})}$
for all $q,j\in\left\{ 0,\ldots,k\right\} $ and $i\in\left\{ 1,\dots,n\right\} $.
Writing the Hessian as the sum of its $\left(q,j\right)$-blocks (each
embedded at the appropriate position with zeros elsewhere) and applying
the triangle inequality bounds the spectral norm by the sum of block
norms. Each block satisfies $\left\Vert \frac{\partial^{2}\Phi^{(i)}}{\partial\mathrm{vec}\left(V_{q}\right)\partial\mathrm{vec}\left(V_{j}\right)}\right\Vert \leq\mathcal{R}_{k,q,j}\mathcal{Q}_{j}\mathcal{Q}_{q}+\mathcal{T}_{k,j}\mathcal{Q}_{j}$
by Lemma \ref{lem:Hessian bounds} with $w=k$ (since $\Phi=\Phi_{k}$).
Therefore, using (\ref{eq:Elementwise Hessian Bounds}), the Lagrange
remainder can be bounded as
\begin{equation}
\left\Vert R\left(\sigma,\tilde{\theta}\right)\right\Vert \leq\frac{\left\Vert \tilde{\theta}\right\Vert ^{2}}{2}\left(\sum_{i=1}^{n}\sum_{q=0}^{k}\sum_{j=0}^{k}\left(\mathcal{R}_{k,q,j}\mathcal{Q}_{j}\mathcal{Q}_{q}+\mathcal{T}_{k,j}\mathcal{Q}_{j}\right)\right),\label{eq:R bound 2}
\end{equation}
where the terms $\mathcal{Q}_{j},\mathcal{Q}_{q},\mathcal{T}_{k,j},$
and $\mathcal{R}_{k,q,j}$ are defined in Lemma \ref{lem:Hessian bounds}.
As a result, defining the polynomial function $\rho_{0}$ as
\begin{equation}
\rho_{0}\left(\left\Vert \sigma\right\Vert \right)\triangleq\frac{1}{2}\sum_{i=1}^{n}\sum_{q=0}^{k}\sum_{j=0}^{k}\left(\mathcal{R}_{k,q,j}\mathcal{Q}_{j}\mathcal{Q}_{q}+\mathcal{T}_{k,j}\mathcal{Q}_{j}\right)\label{eq:rho 0 function explicit}
\end{equation}
yields $\left\Vert R\left(\sigma,\tilde{\theta}\right)\right\Vert \leq\rho_{0}\left(\left\Vert \sigma\right\Vert \right)\left\Vert \tilde{\theta}\right\Vert ^{2}$.
Because Lemma \ref{lem:Hessian bounds} provides quadratic polynomial
bounds on such terms with respect to the input, it follows that $\rho_{0}$
is a quadratic function of the form $\rho_{0}\left(\left\Vert \sigma\right\Vert \right)=a_{2}\left\Vert \sigma\right\Vert ^{2}+a_{1}\left\Vert \sigma\right\Vert ^{1}+a_{0}$
with constants $a_{2},a_{1},a_{0}\in\mathbb{R}_{>0}$. We note that
the explicit values of $a_{2},a_{1},a_{0}$ can be computed by expanding
the polynomial formulation in (\ref{eq:rho 0 function explicit}).
\end{proof}

\bibliographystyle{ieeetr}
\bibliography{encr,ncr,master}

@Article{Vacchini.Sacchi.ea2023,
  author    = {Vacchini, Edoardo and Sacchi, Nikolas and Incremona, Gian Paolo and Ferrara, Antonella},
  title     = {Design of a Deep Neural Network-Based Integral Sliding Mode Control for Nonlinear Systems Under Fully Unknown Dynamics},
  pages     = {1789--1794},
  volume    = {7},
  journal   = {IEEE Control Syst. Lett.},
  publisher = {IEEE},
  year      = {2023},
}

@Book{Lax2017,
  author    = {Lax, Peter D and Terrell, Maria Shea},
  title     = {Multivariable Calculus with Applications},
  isbn      = {978-3-030-08913-9},
  publisher = {Springer International Publishing AG},
  year      = {2017},
}

@Article{Lu.Wu.2024,
  author    = {Lu, H. and Wu, J. and Wang, W.},
  title     = {Adaptive deep neural network optimized control for a class of nonlinear strict-feedback systems with prescribed performance},
  pages     = {1-24},
  journal   = {Int. J. Adapt. Control Signal Process.},
  owner     = {Becca},
  publisher = {Wiley Online Library},
  timestamp = {10.10.2024},
  year      = {2024},
}

@Article{Mei.Jian.2024,
  author    = {Mei, J. and Jian, H. and Li, Y. and Wang, W. and Lin, D.},
  title     = {Deep neural networks-based output-dependent intermittent control for a class of uncertain nonlinear systems},
  pages     = {114999},
  volume    = {185},
  journal   = {Chaos Solitons Fractals},
  owner     = {Becca},
  publisher = {Elsevier},
  timestamp = {10.10.2024},
  year      = {2024},
}

@Article{Chen.Mei.ea2024,
  author    = {Chen, Jinhui and Mei, Jun and Hu, Junhao and Yang, Zhanying},
  title     = {Deep neural networks-prescribed performance optimal control for stochastic nonlinear strict-feedback systems},
  pages     = {128633},
  journal   = {Neurocomp.},
  publisher = {Elsevier},
  year      = {2024},
}

@Article{Patil.Le.ea2022,
  author   = {O. Patil and D. Le and M. Greene and W. E. Dixon},
  journal  = {IEEE Control Syst Lett.},
  title    = {Lyapunov-Derived Control and Adaptive Update Laws for Inner and Outer Layer Weights of a Deep Neural Network},
  year     = {2022},
  pages    = {1855-1860},
  volume   = {6},
  keywords = {DNN, RT2, theory, AI, DucLePub},
  url1     = {http://ncr.mae.ufl.edu/papers/csl22_2.pdf},
}

@InProceedings{Patil.Le.ea.2022,
  author    = {O. S. Patil and D. M. Le and E. Griffis and W. E. Dixon},
  booktitle = {Proc. IEEE Conf. Decis. Control},
  title     = {Deep Residual Neural Network {(ResNet)}-Based Adaptive Control: A {L}yapunov-Based Approach},
  year      = {2022},
  pages     = {3487-3492},
  keywords  = {AI, RT2, learning, theory, DucLePub},
  owner     = {wdixon},
}

@InProceedings{Hart.patil.ea2023,
  author    = {R. Hart and O. Patil and E. Griffis and W. E. Dixon},
  booktitle = {Proc. IEEE Conf. Decis. Control},
  title     = {Deep {L}yapunov-Based Physics-Informed Neural Networks ({D}e{L}b-{PINN}) for Adaptive Control Design},
  year      = {2023},
  pages     = {1511-1516},
  keywords = {RT2, learning, adaptive, Hartpub},
}

@InProceedings{Nino.Patil.ea2023,
  author    = {C. F. Nino and O. S. Patil and J. Philor and Z. Bell and W. E. Dixon},
  booktitle = {Proc. IEEE Conf. Decis. Control},
  title     = {Deep Adaptive Indirect Herding of Multiple Target Agents with Unknown Interaction Dynamics},
  year      = {2023},
  pages     = {2509-2514},
  keywords  = {jointAFRL, RT3, RT2, theory, network, robotpub, learning, adaptive},
}

@Article{Le.Patil.ea25,
  author   = {D. Le and O. Patil and C. Nino and W. E. Dixon},
  journal  = {IEEE Trans. on Neural Netw. Learn. Syst.},
  title    = {Accelerated Gradient Approach For Deep Neural Network-based Adaptive Control of Unknown Nonlinear Systems},
  year     = {2025},
  number   = {4},
  pages    = {6299-6313},
  volume   = {36},
  keywords = {RT2, theory, learning},
  url1     = {http://ncr.mae.ufl.edu/papers/tnnls25.pdf},
}

@InProceedings{Nino.Patil.ea2025a,
  author    = {C. F. Nino and O. S. Patil and C. D. Petersen and S. Phillips and W. E. Dixon},
  booktitle = {Proc. AAS/AIAA Sp. Flight Mech. Meet.},
  title     = {Collaborative spacecraft servicing under partial feedback using {L}yapunov-based deep neural networks},
  year      = {2025},
  keywords  = {RT2, jointAF, jointPI, networkpub, aeropub},
}

\end{document}